\newtheorem{theorem}{Theorem}[section]   
\newtheorem{proposition}[theorem]{Proposition}  
\theoremstyle{definition}
\newtheorem{definition}[theorem]{Definition}   
\theoremstyle{remark}
\newtheorem{example}[theorem]{Example}        
\numberwithin{equation}{section}     
\begin{document}

\title[AND-NOT modeling]{AND-NOT logic framework for steady state analysis of Boolean network models}

\author[Veliz-Cuba]{Alan~Veliz-Cuba\lowercase{$^{ab}$}}
\author[Buschur]{Kristina Buschur\lowercase{$^b$}}
\author[Hamershock]{Rose Hamershock\lowercase{$^b$}}
\author[Kniss]{Ariel Kniss\lowercase{$^b$}}
\author[Wolff]{Esther Wolff\lowercase{$^b$}}
\author[Laubenbacher]{Reinhard Laubenbacher\lowercase{$^b$}\\ \\ \\
\lowercase{$^a$}University of Nebraska-Lincoln,
203 Avery Hall, Lincoln, NE 68588, USA.\\ Phone: 402-472-7233, Fax: 402-472-8466, aveliz-cuba2@unl.edu \\ \lowercase{$^b$}Virginia Bioinformatics Institute 
}

\begin{abstract}
Finite dynamical systems (e.g. Boolean networks and logical models) have been used in modeling biological systems to focus attention on the qualitative features of the system, such as the wiring diagram. Since the analysis of such systems is hard, it is necessary to focus on subclasses that have the properties of being general enough for modeling and simple enough for theoretical analysis. In this paper we propose the class of AND-NOT networks for modeling biological systems and show that it provides several advantages. Some of the advantages include:
Any finite dynamical system can be written as an AND-NOT network with similar dynamical properties. There is a one-to-one correspondence between AND-NOT networks, their wiring diagrams, and their dynamics. Results about AND-NOT networks can be stated at the wiring diagram level without losing any information. Results about AND-NOT networks are applicable to any Boolean network. We apply our results to a Boolean model of Th-cell differentiation. 
\end{abstract}

\maketitle 
\section{Introduction}
Discrete models have a long and successful history in systems biology, beginning with 
Boolean network representations of molecular networks \cite{kauffman69b} and
their later generalization, so-called logical models \cite{TA}. They are qualitative, time-discrete models that
are particularly suitable for the analysis of steady state behavior of molecular networks. 
However, as models become larger it is increasingly difficult to analyze them. In order to keep  the analysis of such networks tractable, many studies have focused on specific classes of networks such as: single-switch, unate, nested canalizing, threshold, AND, AND-OR, and linear networks \cite{Wittmann2010436,El,CBN,ADG,sontag,Aracena_FB_BRN,rae,Kauff, Kauff2,NCF,Murrugarra,Murrugarra2}.  In order to be useful for modeling, a family of networks has to be ``sufficiently general'' for modeling biological interactions and ``simple enough'' for theoretical analysis. In this paper we propose the family of AND-NOT networks as such family. AND-NOT networks are a particular the class of Boolean networks that are constructed using only the AND ($\wedge$) and NOT ($\neg$) operators.

A biological justification for the use of AND-NOT networks is that there is evidence that for genes that are regulated by more than one other gene, the different
binding sites exhibit synergistic effects between the different regulators \cite{Nguyen,Gummow, merika}.
This fact motivated the study of conjunctive Boolean networks, that is, networks whose logical rules
are constructed using exclusively the AND operator  \cite{CBN},  where explicit formulas 
for steady states are given; also, upper and lower bounds for the number and length of limit cycles are provided. 
But conjunctive Boolean networks cannot account for inhibitory regulation and the resulting negative
feedback loops, which are common in gene regulatory networks. Allowing the NOT operator, in addition to the AND operator  (i.e. using AND-NOT networks), can make the family of networks sufficiently general to be useful for
modeling \cite{Park15062010}. 

For a formal argument that the family of AND-NOT networks is general enough for modeling, we will show that any discrete model (finite dynamical system, to be precise) can be represented by an AND-NOT network. More precisely, we present an algorithm that assigns to a given general
 discrete model an AND-NOT network which has the same number of steady states, together with an
algorithmic correspondence between steady states of the two networks. This is achieved by adding nodes
to the network as needed. The potential drawback of this algorithm is of course that the network size can potentially get significantly larger, thereby potentially negating any computational advantage gained by the specialized logic. 
However, since molecular networks have typically small in-degree, this growth in the number of network nodes to be added is
 modest in the case of molecular network models. We demonstrate this through an analysis of several
published models and random networks. 

To argue that AND-NOT networks are simple enough for theoretical analysis, we will show how using the specialized logic of AND-NOT networks can provide better theoretical results. For example, in \cite{Richard20093281}, it was shown that an upper bound for the number of steady states can easily be computed for AND-NOT networks (which is not true for arbitrary networks). Also, in \cite{SCBN}, it was shown that the exact number of steady states of AND-NOT networks are encoded in the topological features of the wiring diagram, and that, in some cases, the problem of finding the exact number of steady states can be transformed to the problem of finding maximal independent sets of the wiring diagram, which has been extensively studied \cite{Parallel,Eppstein,Gely20091447,Byskov2004547,Deterministic,lawler558,Makino,Schmidt2009417,Schneider,Wan}. In this paper we will show how the specialized logic of AND-NOT networks can give us better upper bounds for the number of steady states; more precisely, we provide an upper bound for AND-NOT networks that improves on previous upper bounds. Furthermore, we show how this upper bound for AND-NOT networks can actually be used for general networks. We use our results to analyze a Boolean model of Th-cell differentiation. Another theoretical advantage of AND-NOT networks is that they are in a one-to-one correspondence with their wiring diagrams. This observation has several implications, one of which is the possibility to relate dynamic network properties with features of the wiring diagram \cite{CBN,SCBN}. Also, from a given signed wiring diagram one can unambiguously construct and AND-NOT network, which implies that all algorithms or results can be stated at the ``wiring diagram level.'' 

\section{Definitions}
\label{sec-pre}

\begin{definition}
For a signed directed graph $G=(V_G,E_G)$, we denote $I_i=\{j:(j,i,s)\in E_G \}$,
 $I^+_i=\{j:(j,i,+)\in E_G \}$ and $I^-_i=\{j:(j,i,-)\in E_G \}$.
That is, $I_i$ is the set of all incoming edges for node $i$, and $I^+_i$, resp. $I^-_i$ is the subset of positive, resp. negative, edges. All graphs in the rest of the paper will be signed directed graphs unless noted otherwise.
\end{definition}

In order to simplify the graphical representation, we denote two negative (positive) edges between $i$ and $j$ by a bidirectional negative (positive) edge, $\multimapdotboth$ ($\blacktriangleleft\!$---$\!\blacktriangleright$).
If the edges have different signs we denote them by $\bullet\!$---$\!\blacktriangleright$.

\begin{definition}
An \emph{AND-NOT function} is a Boolean function, $h:\{0,1\}^n\rightarrow \{0,1\}$, such that $h$ can be written in the form $$h(x_1,\ldots,x_n)=\bigwedge_{j\in P}x_j \wedge \bigwedge_{j\in N}\neg
x_j,$$ where $P\cap N=\{ \ \}$. If $P=N=\{\ \}$, then $h$ is the constant function 1. If $i\in P$ ($i\in N$, respectively) we say that $i$ or $x_i$ is a \emph{positive} (\emph{negative}) regulator of $h$ or that it is an activator (repressor). An \emph{AND-NOT network} is a Boolean network (BN), $f=(f_1,\ldots,f_n):\{0,1\}^n\rightarrow\{0,1\}^n$, such that $f_i$ is an AND-NOT function for all $i=1,\ldots,n$. AND-NOT networks are also called \emph{signed conjunctive networks}.
\end{definition}

\begin{definition}
The \emph{wiring diagram} of an AND-NOT network is defined by a graph $G=(V_G,E_G)$ with vertices $V_G=\{1,\ldots,n\}$ (or $\{x_1,\ldots,x_n\}$) and edges $E_G$ given as follows: $(i,j,+)\in E_G$ ($(i,j,-)\in E_G$, respectively) if $x_i$ is a positive (negative, respectively) regulator of $f_j$. Notice that nodes corresponding to constant functions have in-degree zero. Also, the wiring diagram of an AND-NOT network contains all the information about the network; that is, we only need to specify the wiring diagram in order to define an AND-NOT network.
\end{definition}

\begin{example}\label{eg:ANBN} Consider the Boolean network $f=(f_1,\ldots,f_6):\{0,1\}^6\rightarrow\{0,1\}^6$ given by \\$f(x)=(x_2\wedge x_4\wedge \neg x_5, x_1\wedge x_6\wedge\neg x_3 \wedge\neg x_5,1,x_6\wedge \neg x_1\wedge\neg x_5,x_6\wedge \neg x_1,1)$. It is easy to see that $f$ is an AND-NOT network. Its wiring diagram is shown in Figure \ref{fig:ANBN}.

\end{example}

\begin{figure}[here]
\centerline{\framebox{\includegraphics[totalheight=2cm]{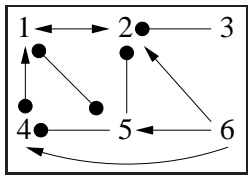}}}
\caption{Wiring diagram of the AND-NOT network in Example \ref{eg:ANBN}.}
\label{fig:ANBN}
\end{figure}

As mentioned in the introduction, some other families of networks that have been studied in the past are single-switch, linear, AND, AND-OR, unate and nested canalyzing functions \cite{Wittmann2010436,El,CBN,ADG,sontag,Aracena_FB_BRN,rae,NCF}. Each family has its own advantages; however, for the purpose of modeling biological systems and for theoretical analysis, it is of interest to have the following properties: First, networks generated using these families should be able to admit a sign assignment; that is, it should be possible to determine the sign of an interaction. Second, in principle, it should be general enough to model all networks; that is, it should be possible to model any type of regulation. Third, for theoretical analysis, it would be useful to have a one-to-one correspondence between wiring diagrams and networks. This property would allow complete encoding of a network in its wiring diagram. The family of linear functions satisfies the third property but not the first two. The family of AND functions satisfies the first and third property but not the second. The family of AND-OR functions satisfies the first property but not the last two. Single-switch, unate, and nested canalyzing functions  satisfy the first two properties but not the third.

On the other hand, AND-NOT networks satisfy all three properties. The first property is satisfied because the sign of a regulation is given by the presence or absence of the NOT operator. The third property follows from the fact that if the positive and negative edges to $i$ are given by $P$ and $N$, resp., then the function for node $i$ is uniquely given by $f_i=\bigwedge_{j\in P} x_j \wedge \bigwedge_{j\in N} \neg x_j$. The second property is given by the fact that any finite dynamical system can be expressed as an AND-NOT network. More precisely, Theorem \ref{thm:main} guarantees that steady states are preserved if we rewrite a general finite dynamical system as an AND-NOT network.

\section{Results}
\label{sec-res}

In this section we show why AND-NOT networks are a good framework for modeling biological systems.

\subsection{AND-NOT networks are general enough for modeling}
\label{sec-general}

One issue that can potentially arise when only using certain classes of networks is that one can have difficulty in modeling certain processes. For example, the family of AND networks does not allow modeling negative interactions. Another example is that the family of linear networks, does not allow modeling signed interactions. In order for a family of networks to be useful for modeling, is has to allow modeling any type of interaction. 

Here we show that the family of AND-NOT networks is general enough for modeling. More precisely, we show that for any finite dynamical system, there exists an AND-NOT network (possibly with more nodes) such that they share  key dynamical properties.

\begin{theorem}\label{thm:main}
Let $h=(h_1,\ldots,h_n):S\rightarrow S$ be a finite dynamical system, where $S=X_1\times \cdots \times X_n$ and all $X_i$'s are finite. Then, there exists an AND-NOT network $g:\{0,1\}^m\rightarrow \{0,1\}^m$ such that there is a bijection between the steady states of $h$ and $g$. Furthermore, $g$ and the bijection between steady states is given algorithmically. We say that $g$ is an AND-NOT representation of $h$.
\end{theorem}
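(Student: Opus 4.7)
The plan is to proceed in two algorithmic stages: first turn the finite dynamical system $h$ into an ordinary Boolean network $f$ in which the steady states correspond bijectively to those of $h$; then turn $f$ into an AND-NOT network $g$ by introducing auxiliary nodes that encode each local rule in disjunctive normal form. Composing the two bijections yields the desired correspondence, and both stages are manifestly algorithmic.

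For the first stage, encode each coordinate $X_i$ by $|X_i|$ Boolean variables $x_{i,v}$ (one per value $v \in X_i$), and define the encoding map $\phi \colon S \to \{0,1\}^N$ by setting $x_{i,v}=1$ iff $s_i = v$, so that each block of $\phi(s)$ is one-hot. Fix an arbitrary $s_0 \in S$. For every pair $(i,v)$ define the Boolean rule $f_{i,v} \colon \{0,1\}^N \to \{0,1\}$ as follows: on inputs $y \in \mathrm{im}(\phi)$, let $f_{i,v}(y) = 1$ iff the $i$-th component of $h(\phi^{-1}(y))$ equals $v$; on inputs $y \notin \mathrm{im}(\phi)$, let $f_{i,v}(y) = 1$ iff $v$ is the $i$-th coordinate of $s_0$. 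With this definition every output of $f$ is a valid one-hot code, and on valid inputs $f$ mimics $h$. Consequently fixed points of $f$ on valid inputs are in bijection with fixed points of $h$, while any putative fixed point on an invalid input would have to equal $\phi(s_0)$, which is valid, a contradiction.

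For the second stage, write each $f_i$ in disjunctive normal form as $f_i = c_{i,1} \vee \cdots \vee c_{i,m_i}$, where each clause $c_{i,j} = \bigwedge_{p \in P_{i,j}} x_p \wedge \bigwedge_{p \in N_{i,j}} \neg x_p$ is itself an AND-NOT function. Introduce auxiliary nodes $z_{i,j}$ with AND-NOT rule $c_{i,j}$, an auxiliary node $w_i$ with AND-NOT rule $w_i = \bigwedge_j \neg z_{i,j}$, and replace the original rule for node $i$ by the single-input AND-NOT rule $x_i = \neg w_i$. At any fixed point of the resulting network $g$ we have $z_{i,j} = c_{i,j}(x)$, hence $w_i = \neg f_i(x)$, hence $x_i = f_i(x)$; conversely, any fixed point $x^*$ of $f$ extends uniquely to a fixed point of $g$ by assigning the auxiliaries their forced values. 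This yields a bijection between the steady states of $f$ and of $g$.

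The main obstacle is the first stage: one must choose an encoding of the multi-valued variables that does not manufacture spurious steady states from inputs outside $\mathrm{im}(\phi)$. Collapsing every invalid input to a single canonical valid state $\phi(s_0)$ is the cleanest fix, because such an input then cannot fix itself. Once this is in place, stage two is essentially bookkeeping, since every Boolean function admits a DNF expansion and the three kinds of nodes introduced are literally AND-NOT by construction. Composing $\phi$ with the map ``extend by forced auxiliaries'' gives the explicit bijection between steady states of $h$ and $g$ promised by the theorem.
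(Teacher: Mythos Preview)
Your proof is correct. It follows the same two-stage architecture as the paper (reduce the finite dynamical system to a Boolean network, then to an AND-NOT network), but the implementation of each stage differs in ways worth noting.

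For stage one, the paper simply invokes the result of Didier--Remy--Chaouiya for the multi-valued to Boolean reduction; your explicit one-hot encoding, with the device of collapsing every invalid input to the single valid state $\phi(s_0)$, is self-contained and works cleanly. For stage two, the paper uses the \emph{conjunctive} normal form: writing $f_n = w_1 \wedge \cdots \wedge w_r$ with each $w_j$ a disjunction of literals, it introduces one auxiliary $y_j$ with rule $\neg w_j$ (already an AND-NOT function by De~Morgan) and replaces the original rule by $\neg y_1 \wedge \cdots \wedge \neg y_r$. Your \emph{disjunctive} normal form approach is the dual construction and requires an extra layer (the node $w_i$) to convert the top-level OR into an AND-NOT, at the cost of one more auxiliary per coordinate. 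Both routes are algorithmic and yield the required bijection; the CNF route is slightly leaner in the number of added nodes, which is relevant later in the paper when the growth in network size is analyzed.
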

\begin{proof}
 A simple proof uses the facts that any finite dynamical system can be written as a Boolean network \cite{Didier2011177}, and that any Boolean function has a conjunctive normal form. 
 
In \cite{Didier2011177}, the authors proved algorithmically that for any finite dynamical system $h$, there exists a Boolean network $f$ (possibly with more nodes) such that $h$ and $f$ have the same number of steady states. Furthermore, the bijection of steady states is also given algorithmically. Therefore, we only need to show that there exists and AND-NOT network $g$, such that there is a bijection between the steady states of $f$ and $g$.

We proceed by induction.
First, consider the conjunctive normal form of $f_n$: $f_n=w_1\wedge w_2\wedge\cdots\wedge w_r$, 
where $w_j$ is of the form $w_j(x)=s_1 x_1\vee s_2 x_2\vee\cdots \vee s_u x_u$ with $s_i\in\{id,\neg\}$ ($id=$identity function). Notice that $\neg w_j$ is an AND-NOT function. Then, define the BN $k=(k_1,\ldots,k_{n+r}):\{0,1\}^{n+r}\rightarrow \{0,1\}^{n+r}$ in variables $(x_1,\ldots,x_n,y_1,\ldots,y_r)$ by
$k_i(x,y)=f_i(x)$ for $i=1,\ldots,n-1$, $k_n(x,y)=\neg y_1\wedge \neg y_2\wedge \cdots \wedge \neg y_r$ and 
$k_i(x,y)=\neg w_i(x)$ for $i=n+1,\ldots,n+r$. 

We now check that the function $\phi(x)=(x,\neg w_1(x),\ldots,\neg w_r(x))$ gives a one-to-one correspondence between steady states of $f$ and $k$. Suppose that $f(x)=x$, then\\ $k(\phi(x))=k(f_1(x),\ldots,f_{n-1}(x), w_1(x)\wedge  w_2(x)\wedge \ldots \wedge  w_r(x),w_1(x),\ldots,w_r(x))$\\ $=k(f_1(x),\ldots,f_{n-1}(x), f_n(x),w_1(x),\ldots,w_r(x))=(x,\neg w_1(x),\ldots,\neg w_r(x))=\phi(x)$; that is, $\phi(x)$ is a steady state $k$. Now, suppose that $k(x,y)=(x,y)$ and notice that in this case $y_i=k_i(x,y)=\neg w_i(x)$; then $(x,y)=\phi(x)$. Also, $f(x)=(f_1(x),\cdots,f_{n-1}(x),f_n(x))$ $=(k_1(x,y),\ldots,k_{n-1}(x,y),w_1(x)\wedge w_2(x)\wedge\cdots\wedge w_r(x))$ $=(x_1,\ldots,x_{n-1},\neg y_1\wedge \neg y_2\wedge \cdots \wedge \neg y_r)$ $=(x_1,\ldots,x_{n-1},k_n(x,y))$ $=(x_1,\ldots,x_{n-1},x_n)=x$. That is, $x$ is a steady state of $f$. Therefore, $k=(k_1,\ldots,k_{n+r})$ is a BN where $k_n,\ldots,k_{n+r}$ are AND-NOT functions and such that there is a one-to-one correspondence between the steady states of $f$ and $k$. By induction, it follows that there is an AND-NOT network $g:\{0,1\}^m\rightarrow \{0,1\}^m$ together with a bijection between the steady states 
of $f$ and $g$. 

Therefore, there is a bijection between the steady states of $h$ and $g$. Furthermore, $g$ and the bijection are given algorithmically.
\end{proof}

The transformation of finite dynamical systems to Boolean networks has been discussed in \cite{Didier2011177}. So, in the rest of the paper we will focus on Boolean networks and AND-NOT networks.

\begin{example}\label{eg:basic}
Consider the BN $f:\{0,1\}^5\rightarrow \{0,1\}^5$ given by $f_1=x_2\vee \neg x_4$, $f_2=x_1\wedge x_3$, $f_3=(x_2\vee \neg x_4)\wedge x_5$, $f_4=x_3\vee x_5$, $f_5=x_3$. The wiring diagram of $f$ is given in Figure \ref{fig:basic} (left).
In order to transform this BN to an AND-NOT network we introduce the variable $x_6$ with Boolean function $f_6=\neg x_2 \wedge x_4$ and $f_7=\neg x_3 \wedge \neg x_5$. Variables $x_6$ and $x_7$ will be used in $g_1$ and $g_4$. Notice that since $x_2 \vee \neg x_4$ appears again in $f_3$, we can simply reuse $x_6$ to keep the number of extra variables as small as possible. Then the AND-NOT network is $g:\{0,1\}^7\rightarrow \{0,1\}^7$ given by $g_1=\neg x_6$, $g_2=x_1\wedge x_3$, $g_3=\neg x_6 \wedge x_5$, $g_4=\neg x_7$, $g_5=x_3$, $g_6=\neg x_2 \wedge x_4$, $g_7=\neg x_3 \wedge \neg x_5$. The wiring diagram of $g$ is shown in Figure \ref{fig:basic} (right).
\end{example}

\begin{figure}[here]
\centerline{\framebox{\includegraphics[totalheight=1.5cm]{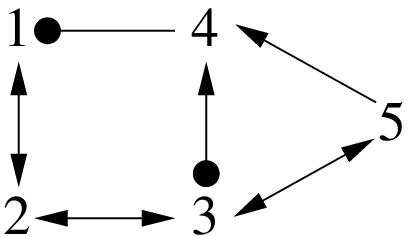}} 
\framebox{\includegraphics[totalheight=1.5cm]{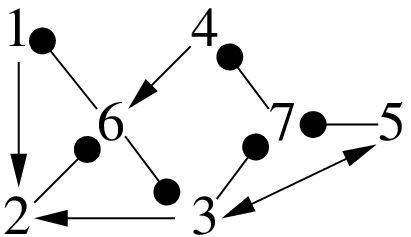}}}
\caption{Wiring diagram of the BN network $f$ and the
AND-NOT network $g$ in Example \ref{eg:basic}.}
\label{fig:basic}
\end{figure}

An additional step in the transformation that can keep the number of extra variables small is given by the following proposition.

\begin{proposition} \label{prop:ANDOR}
Let $f:\{0,1\}^n\rightarrow \{0,1\}^n$ be a BN and define $g:\{0,1\}^n\rightarrow \{0,1\}^n$ by
$g=N_k\circ f\circ N_k$, where $N_k(x_1,\ldots,x_n)=(x_1,\ldots,x_{k-1},\neg x_k,x_{k+1},\ldots,x_n)$. Then $f$ and $g$ are dynamically equivalent.
\end{proposition}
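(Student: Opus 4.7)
The plan is to exhibit an explicit bijection between state spaces that conjugates the two dynamical systems, and the natural candidate is $N_k$ itself.

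The key observation I would establish first is that $N_k$ is an involution: applying negation to the $k$-th coordinate twice gives back the original state, so $N_k \circ N_k = \operatorname{id}_{\{0,1\}^n}$. In particular, $N_k$ is a bijection on $\{0,1\}^n$.

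Next, I would verify the conjugation relation $g \circ N_k = N_k \circ f$. Using the definition of $g$ and the involution property,
$$g \circ N_k \;=\; (N_k \circ f \circ N_k) \circ N_k \;=\; N_k \circ f \circ (N_k \circ N_k) \;=\; N_k \circ f,$$
which is exactly the statement that $N_k$ carries the state transition map of $f$ to the state transition map of $g$.

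From this conjugation, dynamic equivalence is immediate: since $N_k$ is a bijection and $g(N_k(x)) = N_k(f(x))$ for every state $x$, the map $N_k$ is an isomorphism of the state transition graphs of $f$ and $g$. Consequently, trajectories, steady states, and limit cycles of $f$ are in one-to-one correspondence, with matching lengths, with those of $g$ via $N_k$. I do not anticipate a serious obstacle here; the only subtlety worth stating explicitly is that ``dynamically equivalent'' should be understood in the sense of conjugacy by a bijection of the state space, after which everything reduces to the two short computations above.
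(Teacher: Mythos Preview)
Your proof is correct and follows essentially the same approach as the paper: both rely on the observation that $N_k$ is its own inverse, yielding a conjugacy between $f$ and $g$. The paper phrases the conclusion as $g^r = N_k \circ f^r \circ N_k$ for all $r$, which is an immediate consequence of the conjugation relation you establish.
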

\begin{proof}
It is enough to notice that $N_k$ is invertible with inverse $N_k$. Then, $g^r=N_k\circ f^r \circ N_k$; that is, evaluating $f$ is equivalent to evaluating $g$.
\end{proof}

If some functions of a BN are OR-NOT functions, then we can use Proposition \ref{prop:ANDOR} to transform the BN into a BN in the same number of variables such that the OR-NOT functions become AND-NOT functions. Also, Proposition \ref{prop:ANDOR} can be used to transform constant functions $f_k=0$ into constant functions $f_k=1$ (if $f_k=0$, then the $k$-th coordinate function of $N_k\circ f\circ N_k$ is the constant function 1).

\begin{example}\label{eg:basicor}
Consider the BN $f:\{0,1\}^3\rightarrow \{0,1\}^3$ given by $f_1=x_2$, $f_2=x_1\vee \neg x_3$, $f_3=x_2\wedge x_3$. The wiring diagram of $f$ is in Figure \ref{fig:basicor} (left). Since $f_2$ is an OR-NOT function, we can transform it to a AND-NOT function using Proposition \ref{prop:ANDOR}.
Consider $g=N_2\circ f\circ N_2$, given by $g(x)=N_2(f(x_1,\neg x_2,x_3))$ $=N_2(\neg x_2,x_1\vee \neg x_3,\neg x_2 \wedge x_3)$ $=(\neg x_2,\neg(x_1\vee \neg x_3),\neg x_2 \wedge x_3)$ $=(\neg x_2,\neg x_1\wedge x_3,\neg x_2 \wedge x_3)$, with wiring diagram shown in Figure \ref{fig:basicor} (right). Then, $f$ is dynamically equivalent to an AND-NOT network. Notice that the effect of this transformation on the wiring diagram is simple, we simply change the signs of the edges around node 2.
\end{example}

\begin{figure}[here]
\centerline{\framebox{\includegraphics[totalheight=1.5cm]{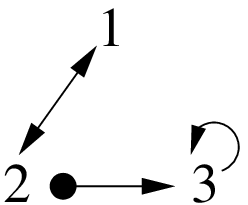}} 
\framebox{\includegraphics[totalheight=1.5cm]{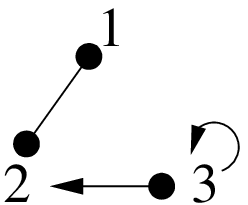}}}
\caption{Wiring diagram of the BN network $f$ and the
AND-NOT network $g$ in Example \ref{eg:basicor}.}
\label{fig:basicor}
\end{figure}

As mentioned in \cite{Didier2011177}, an advantage of transforming finite dynamical systems into Boolean networks is that it can provide insight into the role of feedback loops by disentangling them. In this sense, transforming finite dynamical systems into AND-NOT networks can pass all the information of the role of feedback loops to the wiring diagram. In this case, the wiring diagram is not only a rough representation of the network, but it encodes all the information of the network; in this sense the wiring diagram ``becomes'' the network. This has the potential to reduce the problem of studying the structure of the state space graph (which has $2^n$ elements) to studying the structure of the wiring diagram of the AND-NOT representation (which has $m\geq n$ elements). This can help in understanding the precise role of the network topology in the network dynamics. A similar approach was used successfully to study conjunctive and linear networks \cite{CBN,El}.

\subsection{The variable growth in AND-NOT representation is small}
\label{sec-growth}

For practical purposes it is important to obtain an estimate of how much the 
AND-NOT representation can increase the number of variables.  For arbitrary Boolean networks, the number of extra nodes can be exponential in the number of nodes. However, Boolean models of biological systems are not arbitrary and are actually very sparse with very low in-degree (typically described by a power law distribution \cite{Albert01112005,Huynen01051998}). We will now show that in practice the number of variables introduced by the algorithm can be small.

\begin{table}[here]
\caption{Number of extra variables introduced by the AND-NOT representation. The number of nodes of $f$ and its AND-NOT representation, $g$, are denoted by $n$, $m$, respectively. The BNs were taken from \cite{Remy,Velizlacop,mendozamethod,erbb2,Klamt_th}.
}\label{table:num}
\begin{tabular}{|l|l|r|}
  \hline
   $n$ & $m$ & \% increase \\
  \hline
   12 &   13  & 8\% \\
  \hline
   12 &   15  & 20\%\\
   \hline
   14  &  15   & 7\% \\
   \hline
   20 & 24  & 20\%    \\
   \hline
    23  &  26   & 13\% \\
   \hline
    28  &  28   & 0\% \\
  \hline
    40  & 43 & 7.5\% \\
  \hline
\end{tabular}
\end{table}

\begin{table}[here]
\caption{Average number of extra variables introduced by the AND-NOT representation for random BNs.}\label{table:k}
\begin{tabular}{|l|r|}
  \hline
   in-deg$\leq K$ & \% increase \\
  \hline
   $K=1, 2$ &  0\% \\
  \hline
   $K=3$ &  5.2\% \\
  \hline
   $K=4$ &   10.8\%\\
  \hline
     $K=5$ &   16.2\%\\
  \hline
     $K=6$ &   20.8\%\\
  \hline
       $K=7$ &    24.8\%\\
  \hline
       $K=8$ &     28.6\%\\
  \hline
       $K=9$ &      32.3\%\\
  \hline
       $K=10$ &     36.1\%\\
  \hline
\end{tabular}
\end{table}

In order to study this question, we have applied the procedure to several published models in the literature and studied the question using randomly generated Boolean networks. The first study shows that the increase in the number of variables
for published models is modest (Table \ref{table:num}). The number of variables was increased by 14\% on average with a maximum value of 4 extra nodes. In order to determine the number of extra nodes introduced by our algorithm for more general BNs, we did a statistical analysis. To mimic wiring diagrams coming from biological systems, the edges followed a power law distribution and we considered the maximum in-degree less than or equal to $K$ for $K=1,\ldots,10$ (see Appendix A for details). The results of this second study are shown in Table \ref{table:k}. For example, all networks where nodes have in-degree bounded by $K=2$ can be transformed to AND-NOT networks without increasing the number of nodes. For networks where nodes have in-degree bounded by $K=4$, our method increases the number of nodes by $10.8\%$ on average (see Appendix A for details). 
It is important to mention that in both tables, the growth in the number of extra nodes is far less than exponential.

\subsection{AND-NOT networks can be useful is theoretical analysis}
\label{sec-theory}

As mentioned in the Introduction, the specialized logic of AND-NOT networks can be used to obtain better theoretical results. Such results can arise directly (e.g. \cite{CBN,SCBN}) or by applying results about general Boolean networks to the family of AND-NOT networks.  In this section we show examples of the latter. First, we need the following definitions. 

Let $C$ be a feedback loop of a graph $G$. We say that $C$ is a \emph{strong } feedback loop if there are no edges of the form $k\rightarrow i$,$k\multimapdot j$ in $G\setminus C$ such that $i,j\in C$.

For example, consider the graph $G$ in Figure \ref{fig:strong}. The feedback loop $\{3,4\}$ is not strong because of the edges $1\rightarrow 3$, $1\multimapdot 4$; $\{5,6\}$ and $\{1,2,4,3,5\}$ are not strong because of the edges $1\rightarrow 5$, $1\multimapdot 4$. All other feedback loops are strong.

\begin{figure}[here]
\centerline{\framebox{\includegraphics[totalheight=2.5cm]{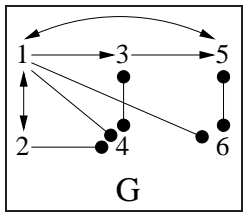}}}
\caption{Graph with only one strong positive feedback loop.}
\label{fig:strong}
\end{figure}

Our first result in this section is an application of \cite[Theorem 3.2]{Remy2008335} to the family of AND-NOT networks (see Appendix B for the proof).

\begin{theorem}\label{thm:strong}
Let $W$ be the wiring diagram of an AND-NOT network, and suppose $J$ intersects all strong positive feedback loops of $W$. Then, the number of steady states is at most $2^{|J|}$.
\end{theorem}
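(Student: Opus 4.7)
The strategy is to reduce the claim to Theorem 3.2 of \cite{Remy2008335}, which bounds the number of steady states by $2^{|J|}$ whenever $J$ intersects every positive feedback loop of the wiring diagram. Our hypothesis is weaker, since $J$ need only intersect the strong positive feedback loops, so the plan is to modify the AND-NOT network $f$ in a way specific to AND-NOT logic, obtaining an auxiliary AND-NOT network $f'$ whose steady states are in bijection with those of $f$ and whose wiring diagram has every positive feedback loop strong in the original $W$. Then $J$ is a positive feedback vertex set for the modified diagram and the Remy bound applies.

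First I would extract the force of AND-NOT logic on a non-strong loop $C$. By definition there exist $k\notin C$ and $i,j\in C$ with $k\rightarrow i$ positive and $k\multimapdot j$ negative in $W$, so $f_i=x_k\wedge(\cdots)$ and $f_j=\neg x_k\wedge(\cdots)$. At any steady state $x$: if $x_k=1$ then the conjunct $\neg x_k$ of $f_j$ is $0$, forcing $x_j=0$; if $x_k=0$ then the conjunct $x_k$ of $f_i$ is $0$, forcing $x_i=0$. Thus $x_i\wedge x_j=0$ identically on steady states, which is precisely the constraint that prevents the two ``consistent'' on-assignments one would expect on a positive cycle from both occurring along $C$.

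Using this constraint I would build $f'$ by local completion of $W$: for each non-strong witness $(k,i,j)$, introduce a fresh auxiliary node together with a simple AND-NOT rule (encoding the forced relation between $x_i,x_j,x_k$) and reroute one arc of $C$ through this auxiliary node so that $C$ acquires an additional negative edge and becomes a negative cycle in the new wiring diagram $W'$. The rule is chosen so that the map sending a steady state $x$ of $f$ to $x$ extended by the values of the auxiliary variables is an explicit bijection with the steady states of $f'$. After processing all witnesses, every positive feedback loop remaining in $W'$ is a cycle of $W$ that was never flagged as non-strong, i.e.\ a strong positive feedback loop of $W$. Applying Theorem 3.2 of \cite{Remy2008335} to $f'$ with the same $J$ then gives the bound $2^{|J|}$ on the steady states of $f'$, and hence of $f$.

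The main obstacle is the completion step. The construction must simultaneously (a) preserve steady states as a clean bijection (not just an inequality of counts), (b) destroy or flip the sign of every non-strong positive feedback loop, and (c) avoid creating new positive cycles through the auxiliary nodes that would not already be strong in $W$. Point (c) is the combinatorially delicate core of the argument, because one must control how the rerouted arcs interact with all cycles of $W$, not just the one being treated; handling overlapping non-strong loops and ensuring the process terminates is where most of the technical work will live.
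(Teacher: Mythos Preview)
Your plan diverges from the paper's proof in a fundamental way, and the part you yourself flag as ``the combinatorially delicate core'' is precisely what the paper avoids entirely.

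The paper does not modify the network at all. Instead it uses Theorem~3.2 of \cite{Remy2008335} in its \emph{local} form (what the paper states as Theorem~B.1): given two distinct steady states $a\neq b$, there is some $x\in\{0,1\}^n$ such that the local interaction graph $W(x)$ contains a positive feedback loop $C$ with vertices inside $\{i:a_i\neq b_i\}$. The entire argument then reduces to a one-paragraph observation about AND-NOT logic: any positive feedback loop $C$ occurring in some $W(x)$ is automatically strong in the global wiring diagram $W$. Indeed, if $C$ were not strong, there would be $k$ with $k\rightarrow i$ and $k\multimapdot j$ in $W\setminus C$ for some $i,j\in C$; but then $f_i=x_k\wedge(\cdots)$ and $f_j=\neg x_k\wedge(\cdots)$, and whichever value $x_k$ takes, one of $f_i,f_j$ is identically~$0$ in the remaining variables, so the corresponding vertex has no incoming edge from $C$ in $W(x)$, contradicting $C\subseteq W(x)$. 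Hence $C$ is strong, $J$ meets $C$, and the projection $x\mapsto x_J$ separates steady states.

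Your proposal, by contrast, tries to apply the \emph{global} positive-feedback-vertex-set bound to an auxiliary network $f'$ built by ``rerouting'' arcs through fresh nodes. This construction is never specified, and your worry (c) is well founded: adding or rerouting edges in a signed digraph typically creates new cycles whose sign and strength you do not control, and processing one non-strong loop can interfere with others sharing vertices. Even your first step---flipping the sign of a single non-strong loop $C$ while keeping a steady-state bijection---is not clearly achievable within the AND-NOT class: removing a conjunct from some $f_q$ to break an arc of $C$ changes the fixed-point set, and merely adding the conjunct $\neg x_j$ to $f_i$ (which your constraint $x_i\wedge x_j=0$ would justify) leaves $C$ intact as a positive cycle in $W'$. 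So the proposal has a genuine gap at exactly the point you identify, and the paper's route via $W(x)$ sidesteps the whole difficulty.
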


\begin{example}
Consider the AND-NOT network with wiring diagram given in Figure \ref{fig:strong}. The only strong positive feedback loops are $\{1,2\}$ and $\{1,3,5\}$. Since $J=\{1\}$ intersects them, Theorem \ref{thm:strong} guarantees that there are at most $2^{|J|}=2$ steady states.
\end{example}

Intuitively, Theorem \ref{thm:strong} is telling us which positive feedback loops contribute to the presence of steady states; it says that they have to be strong. We also provide a slight generalization of Theorem \ref{thm:strong}. We need the following definition.

A feedback loop $C$ of a graph $W$ is called \textit{inconsistent} if there is a vertex $k_C$ such that there is a positive path of the form $k_C\rightarrow i_1\rightarrow \cdots \rightarrow i_r \rightarrow t_C$ from $k_C$ to $t_C\in C$ and a negative path of the form $k_C\rightarrow j_1\rightarrow \cdots \rightarrow j_r \multimapdot u_C$, from $k_C$ to $u_C\in C$ such that $k_C\rightarrow t_C$, $k_C\multimapdot u_C$ are not edges in $C$ and $|I_{j_1}|=\ldots=|I_{j_r}|=1$. When such vertex $k_C$ does not exist, we say that $C$ is \emph{consistent}.	

For example, consider the graph $W$ in Figure \ref{fig:completion}. The positive feedback loop $\{3,4\}$ is inconsistent because of the paths $1\rightarrow 3$ and $1\rightarrow 2\multimapdot 4$. The positive feedback loop $\{5,6\}$ is inconsistent because of the paths $1\rightarrow 3\rightarrow 5$ and $1\multimapdot 6$. Also, the positive feedback loop $\{1,2,4,3,5\}$ is inconsistent because of the paths $1\rightarrow 3\rightarrow 5$ and $1\rightarrow 2\multimapdot 4$. Then, the only consistent feedback loops are $\{1,2\}$ and $\{1,3,5\}$.

\begin{figure}[here]
\centerline{\framebox{\includegraphics[totalheight=2.5cm]{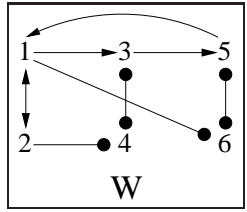}} }
\caption{Wiring diagram of the AND-NOT network in Example \ref{eg:completion}. }
\label{fig:completion}
\end{figure}

We say that a set $J\subseteq \{1,\ldots,n\}$ \textit{dominates} a graph $W$ if $J$ intersects all consistent positive feedback loop and for each feedback loop $C$ that is inconsistent and strong, $J$ intersects $C$ or $J$ contains at least one $k_C$. For example, the set $\{1\}$ dominates the graph $W$ in Figure \ref{fig:completion}.

With these definitions we have the following theorem that gives an upper bound on the number of steady states using topological features of the wiring diagram (see Appendix B for the proof).

\begin{theorem}\label{thm:badv}
Let $W$ be the wiring diagram of an AND-NOT network, and suppose $J$ dominates $W$. Then the number of steady states is at most $2^{|J|}$.
\end{theorem}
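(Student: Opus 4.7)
\medskip

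\noindent\emph{Proof plan.}
The plan is to reduce Theorem~\ref{thm:badv} to Theorem~\ref{thm:strong} by ``completing'' the wiring diagram in a way that eliminates inconsistent strong positive feedback loops while preserving the steady-state set. Let $f$ be the AND-NOT network of $W$. For every strong positive feedback loop $C$ of $W$ that is inconsistent and satisfies $J\cap C=\emptyset$, the domination hypothesis supplies an inconsistency witness $k_C\in J$ together with vertices $t_C,u_C\in C$ and the two paths of the inconsistency definition. I would then define $W^*$ by adding to $W$, for each such $C$, the direct edges $(k_C,t_C,+)$ and $(k_C,u_C,-)$, and let $g$ be the AND-NOT network of $W^*$.

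The first step is to show $\mathrm{Fix}(f)=\mathrm{Fix}(g)$. For this, I would argue that each newly added conjunct is redundant in steady state. If $x_{k_C}=0$, iterating along the positive path $k_C\to i_1\to\cdots\to i_r\to t_C$ the AND-NOT semantics give $x_{i_1}=x_{k_C}\wedge\cdots=0$, then $x_{i_2}=x_{i_1}\wedge\cdots=0$, and so on down to $x_{i_r}=0$; hence $f_{t_C}(x)=0$ and the extra conjunct $\wedge\,x_{k_C}$ appearing in $g_{t_C}$ changes nothing. If $x_{k_C}=1$, the hypothesis $|I_{j_1}|=\cdots=|I_{j_r}|=1$ on the negative path forces $x_{j_1}=x_{k_C}=1$, then $x_{j_2}=x_{j_1}=1$, and so on to $x_{j_r}=1$; so the conjunct $\neg x_{j_r}$ already present in $f_{u_C}$ forces $f_{u_C}(x)=0$, and the extra conjunct $\wedge\,\neg x_{k_C}$ in $g_{u_C}$ changes nothing. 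The complementary sub-cases ($x_{k_C}=1$ at $t_C$ and $x_{k_C}=0$ at $u_C$) are immediate since the added conjunct evaluates to $1$.

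The second step is to check that $J$ intersects every strong positive feedback loop of $W^*$; Theorem~\ref{thm:strong} applied to $g$ then yields the bound $2^{|J|}$. If a strong positive feedback loop $C'$ of $W^*$ uses one of the added edges, then it contains $k_C\in J$, and $J$ intersects $C'$. Otherwise $C'$ lies entirely in $W$, and since $W\setminus C'\subseteq W^*\setminus C'$, it is also a strong positive feedback loop of $W$. If such $C'$ is consistent, $J$ intersects it by domination. If it is inconsistent, then either $J\cap C'\neq\emptyset$ (done) or some $k_{C'}\in J$; but in the latter case the edges $(k_{C'},t_{C'},+),(k_{C'},u_{C'},-)$ lie in $W^*\setminus C'$ and witness $t_{C'},u_{C'}\in C'$, contradicting strength of $C'$ in $W^*$.

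The main obstacle will be the first step, where the propagation must be spelled out cleanly for each value of $x_{k_C}$ and each modified target. It is essential to use the asymmetry built into the inconsistency definition: the in-degree-$1$ condition on the negative path is required so that $x_{j_r}=x_{k_C}$ is genuinely forced in steady state, whereas no such condition is needed on the positive path because in an AND-NOT function a positive conjunct equal to $0$ propagates unconditionally to the target. Once this asymmetry is invoked, both steps of the reduction close cleanly.
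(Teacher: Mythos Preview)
Your approach is essentially identical to the paper's: both proofs ``complete'' $W$ by adding the direct edges $(k_C,t_C,+)$ and $(k_C,u_C,-)$ for each strong inconsistent positive feedback loop $C$ with $J\cap C=\emptyset$, argue that this preserves steady states, and then apply Theorem~\ref{thm:strong} to the completed diagram. The only cosmetic difference is that the paper frames the fixed-point preservation as an induction on the number of added edges, whereas you argue directly that each added conjunct is redundant at any fixed point; your treatment of the second step (why $J$ meets every strong positive loop of $W^*$) is in fact more explicit than the paper's, which elides the contradiction with strength in $W^*$ when $C'\subseteq W$ is inconsistent and $J\cap C'=\emptyset$.

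One small remark on the first step: your asymmetry observation is exactly right, and the direction $\mathrm{Fix}(f)\subseteq\mathrm{Fix}(g)$ goes through cleanly because the propagation is carried out in a steady state of $f$ itself, where the in-degree-$1$ hypothesis on the negative path literally forces $x_{j_s}=x_{k_C}$. The reverse inclusion $\mathrm{Fix}(g)\subseteq\mathrm{Fix}(f)$ is slightly more delicate because in $W^*$ the vertices $j_s$ may have acquired extra incoming edges, so the in-degree-$1$ propagation no longer applies verbatim in a steady state of $g$; the paper's inductive formulation has the same wrinkle. However, for the theorem you only need $|\mathrm{Fix}(f)|\le|\mathrm{Fix}(g)|$, so the clean inclusion $\mathrm{Fix}(f)\subseteq\mathrm{Fix}(g)$ already suffices.
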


It is not difficult to see that the bound given by Theorem \ref{thm:strong} is greater than or equal than the bound given by Theorem \ref{thm:badv}. The next example shows that the inequality is in some cases strict.

\begin{example}\label{eg:completion} Consider the BN $f:\{0,1\}^6\rightarrow \{0,1\}^6$ given by\\
$\begin{array}{lll}
f_1 & =& x_2\wedge x_5,  \\
f_2 & =&  x_1, \\
f_3 & =&  x_1\wedge \neg x_4,\\
f_4 & =&  \neg x_2\wedge \neg x_3, \\
f_5 & =&  x_3\wedge \neg x_6, \\
f_6 & =&  \neg x_1 \wedge \neg x_5,
\end{array}$
\end{example}

Its wiring diagram is shown in Figure \ref{fig:completion}. It is easy to see that $\{1,3,5\}$ intersects all strong positive feedback loops. Then, Theorem  \ref{thm:strong} gives the upper bound $2^3=8$. On the other hand, since $\{1\}$ dominates the wiring diagram, Theorem \ref{thm:badv} gives the upper bound 2. That is, Theorem \ref{thm:badv} gave a better upper bound on the number of steady states. Notice that in this case the actual number of steady states is 2, namely, $0 0 0 1 0 1$ and $1 1 1 0 1 0$.

One might argue that having better results for AND-NOT networks is not enough to justify their use. After all, since we are considering a smaller family of Boolean networks we should of course obtain stronger results. However, the combination of Theorem \ref{thm:main} and results about AND-NOT networks automatically generates theorems for all Boolean networks. Furthermore, such combination can in some cases provide stronger results. This deserves further explanation which is illustrated in Figure \ref{fig:idea}. Consider a theorem about Boolean networks that gives us information about certain dynamical properties, ``Thm.''. On the other hand, consider a similar theorem about AND-NOT networks,  ``Thm.$^*$''. Then, given a Boolean network $f$, we have two choices, we can apply  Thm. to $f$; or, we can use Theorem \ref{thm:main} to find the AND-NOT representation of $f$, then apply $\textrm{Thm.}^*$, and then use Theorem \ref{thm:main} to obtain information about the original Boolean network $f$. In Section \ref{sec-bio} we use a published Boolean model to show that the latter can  give stronger results.

\begin{figure}[here]
\centerline{\framebox{\includegraphics[totalheight=2.5cm]{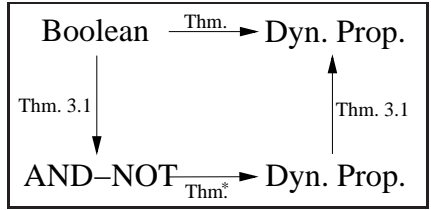}}  }
\caption{Extension of theorems about AND-NOT network to all Boolean networks.}
\label{fig:idea}
\end{figure}

For example, combining Theorem \ref{thm:main} and \ref{thm:badv} we obtain the following theorem.

\begin{theorem}\label{thm:BNbound}
Let $f$ be any Boolean network and suppose that $J$ dominates the wiring diagram of its AND-NOT representation. Then, $f$ has at most $2^{|J|}$ steady states.
\end{theorem}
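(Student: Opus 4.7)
The plan is to simply chain together the two ingredients already assembled in the paper: Theorem \ref{thm:main} produces, for any Boolean network $f$, an AND-NOT representation $g$ with a bijection between the steady-state sets, and Theorem \ref{thm:badv} bounds the number of steady states of an AND-NOT network in terms of any dominating set of its wiring diagram. The theorem we want to prove is precisely the composition of these two facts, so the ``proof'' is really just an assembly step.

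More concretely, first I would invoke Theorem \ref{thm:main} on $f$ to obtain an AND-NOT network $g:\{0,1\}^m\to\{0,1\}^m$ together with an explicit bijection $\psi$ between the steady states of $f$ and those of $g$. This guarantees $|\mathrm{Fix}(f)|=|\mathrm{Fix}(g)|$. Next, since the hypothesis of the theorem is stated exactly in terms of the wiring diagram of this AND-NOT representation $g$, and $J$ is assumed to dominate that wiring diagram, Theorem \ref{thm:badv} applies verbatim to $g$ and yields $|\mathrm{Fix}(g)|\le 2^{|J|}$. Combining the two gives $|\mathrm{Fix}(f)|\le 2^{|J|}$, which is the desired conclusion.

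There is essentially no obstacle in this argument; the conceptual content has been pushed into Theorems \ref{thm:main} and \ref{thm:badv}, both of which are proved elsewhere in the paper (the latter in Appendix B). The only thing I would be careful about is making sure the statement refers to \emph{the} AND-NOT representation produced by the algorithm of Theorem \ref{thm:main}, rather than some arbitrary AND-NOT network with the same steady-state count — the hypothesis ``$J$ dominates the wiring diagram of its AND-NOT representation'' already fixes this, but I would phrase the proof so that the reader sees $g$ introduced explicitly, and then notes that the dominating-set hypothesis is being read against the wiring diagram of that particular $g$. After that, one line citing each of the two theorems suffices to close the argument.
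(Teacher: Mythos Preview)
Your proposal is correct and matches the paper's own treatment exactly: the paper does not give a separate proof of Theorem~\ref{thm:BNbound} but simply states that it is obtained by combining Theorem~\ref{thm:main} and Theorem~\ref{thm:badv}. Your write-up is in fact more explicit than the paper, which just gestures at the composition.
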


We now show that this theorem can in fact provide a better upper bound for the number of steady states.

\subsection{Application to Th-cell differentiation}
\label{sec-bio}
We apply our results to the BN model proposed in \cite{mendozamethod} for Th-cell differentiation. The model is a BN in 23 variables, $f:\{0,1\}^{23}\rightarrow \{0,1\}^{23}$. 
Below is the list of Boolean functions. The wiring diagram is shown in Figure \ref{fig:Th}.

$\begin{array}{lllllll}
x_1 &=& GATA3     & ,  & f_1	 & =& (x_1\vee x_{21})\wedge \neg x_{22}; \\
x_2 &=& IFN-\beta  & ,  & f_2	 & =& 0; \\
x_3 &=& IFN-\beta R & ,  & f_3   & =& x_2;\\
x_4 &=& IFN-\gamma & ,   & f_4	 & =&  (x_{14}\vee x_{16}\vee x_{20}\vee x_{22})\wedge \neg x_{19};\\
x_5 &=& IFN-\gamma R & ,   & f_5  & =&  x_4;\\
x_6 &=& IL-10 & ,   & f_6  & =&  x_1;\\
x_7 &=& IL-10R & ,   & f_7  & =& x_6; \\
x_8 &=& IL-12 & ,   & f_8  & =& 0; \\
x_9 &=& IL-12R & ,   & f_9  & =& x_8\wedge \neg x_{21};\\
x_{10} &=& IL-18 & ,   & f_{10}   & =& 0;\\
x_{11} &=& IL-18R & ,   & f_{11}   & =& x_{10}\wedge \neg x_{21};\\
x_{12} &=& IL-4 & ,   & f_{12}   & =& x_1\wedge \neg x_{18};\\
x_{13} &=& IL-4R & ,   & f_{13} & =& x_{12}\wedge \neg x_{17};\\
x_{14} &=& IRAK & ,   & f_{14}  & =& x_{11};\\
x_{15} &=& JAK1 & ,   & f_{15}   & =&  x_5\wedge\neg x_{17};\\
x_{16} &=& NFAT & ,   & f_{16} & =&  x_{23};\\
x_{17} &=& SOCS1 & ,   & f_{17}  & =& x_{18}\vee x_{22};\\
x_{18} &=& STAT1 & ,   & f_{18}  & =& x_3\vee x_{15};\\
x_{19} &=& STAT3 & ,   & f_{19}  & =& x_7;\\
x_{20} &=& STAT4 & ,   & f_{20}  & =& x_9\wedge \neg x_1;\\
x_{21} &=& STAT6 & ,   & f_{21}   & =& x_{13};\\
x_{22} &=& T-bet & ,   & f_{22}   & =& (x_{18}\vee x_{22})\wedge\neg x_1;\\
x_{23} &=& TCR & ,   & f_{23}   & =&   0.
\end{array}$

\begin{figure}[here]
\centerline{\framebox{\includegraphics[totalheight=6cm]{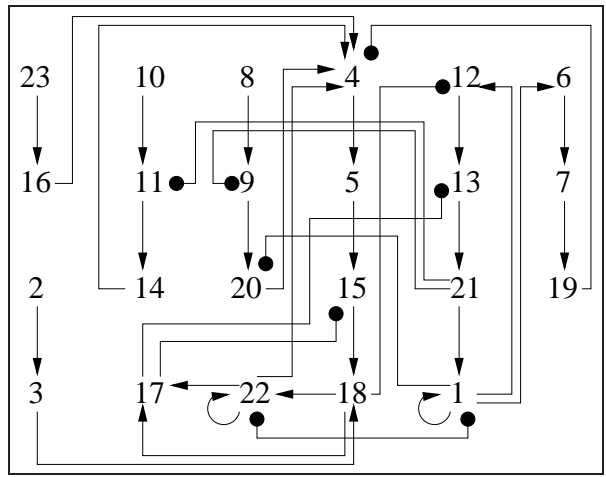}} }
\caption{Wiring diagram of the Th-cell differentiation model.}
\label{fig:Th}
\end{figure}

Using our algorithms we obtain the AND-NOT network, $g:\{0,1\}^{26}\rightarrow \{0,1\}^{26}$, shown in Figure \ref{fig:ThAN}. It turns out that the set $\{1,22\}$ dominates the wiring diagram of $g$ (see Appendix C for details). Then, by Theorem \ref{thm:BNbound}, the number of steady states of $f$ is at most $2^2=4$. On the other hand, all previous results about steady states (e.g. \cite{Aracena_FB_BRN,Richard20093281}) give 8 as the upper bound. That is, using the AND-NOT representation can provide a better upper bound, even for general Boolean networks. The actual number of steady states of the model is 3 (see \cite{mendozamethod} for details).

\begin{figure}[here]
\centerline{\framebox{\includegraphics[totalheight=6cm]{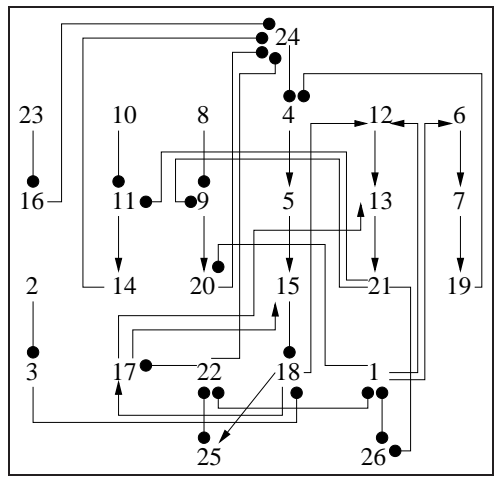}} }
\caption{Wiring diagram of the AND-NOT representation of the Th-cell differentiation model.}
\label{fig:ThAN}
\end{figure}

\section{Discussion}
\label{sec-dis}
The results presented in this paper, together with other results in the literature, support that the family of AND-NOT networks are general enough for modeling and simple enough for theoretical analysis. Given any finite dynamical system, it is possible to create an AND-NOT network such that they have similar dynamical properties. This has two implications: First, this means that using AND-NOT networks in modeling does not pose any technical restriction on the type of interactions one can model. Second, every result about AND-NOT networks can be applied to general Boolean networks, which can give better results (e.g. Theorem \ref{thm:BNbound}). One potential drawback for this framework is that the AND-NOT representation can have more nodes. However, for networks that arise from modeling biological systems, this increase in the number of nodes is modest (Section \ref{sec-growth}). 

Other advantages of using AND-NOT networks are the following: First, all information about the network is actually contained in the network's wiring diagram. Specifically, there is a one-to-one correspondence between AND-NOT networks and graphs, so that the network can be reconstructed unambiguously from the wiring diagram. In \cite{RuiSheng} the authors followed a similar approach to successfully study cascading effects. Second, due to this correspondence, we can state all results about AND-NOT networks using wiring diagrams only. This means that questions about AND-NOT networks can be reformulated as questions about graphs; then, one can use tools from graph theory and combinatorics to study them (e.g. antichains, posets, inclusion-exclusion principle, independent sets \cite{CBN,SCBN}). This deserves further investigation.

Finally, we point out that AND-NOT networks are special cases of so-called \emph{nested canalyzing} Boolean networks. These were first introduced in \cite{Kauff, Kauff2} as good candidates for models with ``biologically meaningful" regulatory rules, and have since been studied extensively. In \cite{Murrugarra} this concept was generalized to multi-state models, and it was shown there that the large majority of regulatory rules that appear in published models of biological networks are of this form. It was shown furthermore that nested canalyzing networks have dynamic properties one would expect to find in biological networks, such as short limit cycles and a small number of attractors. Thus, the results in the present paper imply that in order to study the steady state behavior of general network models, one can focus on the very restrictive class of nested canalyzing networks \cite{Murrugarra2}, instantiated as AND-NOT networks and make use of their very special properties.

\section*{Appendix A}
We describe here the details of the study to determine how many nodes are added by
the construction of the AND-NOT representation. To mimic wiring diagrams coming from biological systems, the edges followed a power law distribution. More precisely, given $K$ fixed and a parameter $\gamma$, the probability for a node to have $k\leq K$ nodes is $p_k=k^{-\gamma}$ (up to a normalization factor). For example, if $K=4$, the probabilities of having $1$, $2$, $3$ and 4 nodes are $p_1=c1^{-\gamma}=c$, $p_2=c2^{-\gamma}$, $p_3=c3^{-\gamma}$ and $p_4=c4^{-\gamma}$, respectively, where $c=\frac{1}{1^{-\gamma}+2^{-\gamma}+3^{-\gamma}+4^{-\gamma}}$ so that $p_1+p_2+p_3+p_4=1$. Also, to mimic biological regulation, we restricted our analysis to Boolean functions that admitted a sign assignment for the edges. These Boolean functions are called unate, biologically meaningful and regulatory functions \cite{sontag,rae,Aracena_FB_BRN}.

Denote with $e_k$ the average number of extra nodes introduced by a Boolean function in $k$ variables. Then, a BN that follows the distribution mentioned above will have, on average, $p_1e_1+p_2e_2+\cdots+p_Ke_K$ extra nodes. Now, we need to estimate $e_k$. 

Consider a Boolean function, $h$, that depends on $k$ variables. For $k=1$ there are 2 functions, $h=x_1$ and $h=\neg x_1$ and we do not need to introduce any new nodes; then $e_1=0$. For $k=2$ there are 8 functions and they are of the form $h=s_1x_1\wedge s_2x_2$ or $h=s_1x_1\vee s_2x_2$, where $s_ix_i=x_i$ or $s_ix_i=\neg x_i$. For functions of the form $h=s_1x_1\wedge s_2x_2$ we do not introduce any new nodes, and for functions of the form $h=s_1x_1\vee s_2x_2$ we can use Proposition \ref{prop:ANDOR} to transform $h$ to an AND-NOT function, so we do not introduce new nodes either. Then $e_2=0$. For $k=3$, there are 72 functions. An exhaustive-search analysis shows that of those 72 Boolean functions, 16 introduce 0 nodes, 48 introduce 1 node, and 8 introduce 3 nodes; then the average number of extra nodes in this case is $e_3=\frac{16*0+48*1+8*3}{72}=1$. For $k=4$, there are 1824 Boolean functions. An exhaustive-search analysis shows that of those 1824 functions, 32 introduce 0 nodes, 320 introduce 1 node, 480 introduce 2 nodes, 960 introduce 3 nodes and 32 introduce 4 nodes; thus the average number of extra nodes in this case is $e_4=\frac{32*0+320*1+480*2+960*3+32*4}{1824}=2.35$. For $k=5$, there are 220608 functions and an exhaustive-search analysis shows that $e_5=4.03$. For $k=6$ there are approximately 
$5\times 10^8$ functions and an exhaustive-search analysis would be unfeasible. However, we have the following result.

\textbf{Theorem A.1.}
\emph{
The average number of extra nodes for a unate function of $k$ variables is at most $C(k,\lfloor k/2 \rfloor)$; that is, $e_k\leq C(k,\lfloor k/2 \rfloor)$. Where $C$ is the binomial coefficient and $\lfloor \ \rfloor$ is the floor function.
}
\begin{proof}
Without loss of generality we assume the CNF of the Boolean function $f$ has no negative signs. Let $f=w_1\wedge \ldots \wedge w_r$ be the CNF, where $w_i$ has the form $w_i=x_1\vee \ldots \vee x_s$. For each $i$, define $S_i=\{l:x_l \textrm{ appears in } w_i\}$. 

Now, if there are $i$, $j$ such that $S_i\subseteq S_j$, then we can simplify $w_i\wedge w_j$ to $w_i$ (e.g. $(x_1\vee x_2)\wedge (x_1\vee x_2\vee x_3)=x_1\vee x_2$). That is, we can simplify the CNF so that $S_i\nsubseteq S_j$ for all $i\neq j$. 

Thus, $S_1,\ldots,S_r$ is a family of subsets of $\{1,\ldots,k\}$ such that no one is contained in the other. Sperner's theorem \cite{Sperner} states that $r \leq C(k,\lfloor k/2 \rfloor)$. This implies that for any unate function in $k$ variables, we need at most $\leq C(k,\lfloor k/2 \rfloor)$ extra nodes to obtain the AND-NOT representation. Therefore, $e_k\leq C(k,\lfloor k/2 \rfloor)$.

\end{proof}

It is important to mention that the exhaustive-search analysis done for $k=3,4,5$ suggests that $e_k$ is actually much smaller than $C(k,\lfloor k/2 \rfloor)$. In fact, we did a statistical analysis for $k=6,\ldots,10$ using a total of 5000000 Boolean functions chosen at random (1000000 for each $k$). The analysis shows the following approximations: $e_6\approx 5.32$, $e_7\approx 7.04$, $e_8\approx 9.32$,  $e_9\approx 12.24$, $e_{10}\approx 15.96$.

Table \ref{table:k} shows a summary of our analysis for $\gamma=2.5$. For example, if $K=4$, then the fractions of functions with 1, 2, 3 and 4 variables are on average $p_1=.786$, $p_2=.139$, $p_3=.0504$ and $p_4=.0246$, respectively. Then, the average number of extra nodes is: 
$$100(p_1e_1+p_2e_2+p_3e_3+p_4e_4 )=100(.786*0+.139*0+.0504*1 +.0246*2.35)\approx 10.8\%.$$

\section*{Appendix B}

Here we prove Theorem \ref{thm:strong} and \ref{thm:badv}. As mentioned in Section \ref{sec-theory}, Theorem \ref{thm:strong} is an application of  \cite[Theorem 3.2]{Remy2008335} to the family of AND-NOT networks. First we need the following definition.

Let $f:\{0,1\}^n\rightarrow \{0,1\}^n$ be a Boolean network and consider $x\in \{0,1\}^n$. 
Then, $W(x)=(V,E)$ is the graph with vertices $V=\{1,\ldots,n\}$ and the following edges:\\
$(j,i,+)\in E$ if $x_j=0$ and $f_i(x)<f_i(x+e_j)$, or if $x_j=1$ and $f_i(x-e_j)<f_i(x)$;\\
$(j,i,-)\in E$ if $x_j=0$ and $f_i(x)>f_i(x+e_j)$, or if $x_j=1$ and $f_i(x-e_j)>f_i(x)$;\\
where $e_j$ is the vector given by $(e_j)_i=\delta_{ij}$ ($\delta$ is the Kronecker delta).
Notice that if $(j,i,+)$ or $(j,i,-)$ is an edge in $W(x)$, then changing the $j$-th coordinate of $j$ produces a change in $f_j$.
Notice that for AND-NOT networks we have that $W(x)\subseteq W$ for all $x$; in fact, this is true for more general networks.

\textbf{Theorem B.1.\cite{Remy2008335}}
\emph{
Let $f$ be a Boolean network and suppose $a$ and $b$ are steady states of $f$. Then, there there exists $x$ such that $W(x)$ has a positive feedback loop with vertices in the set $\{i:a_i\neq b_i\}$.\\
}

We now prove Theorem \ref{thm:strong}.
\begin{proof}
Let $\phi:\{0,1\}^n\rightarrow \{0,1\}^{|J|}$ defined by $\phi(x)=x_J$. 
We will show that if $a\neq b$ are steady states of $g$, then $\phi(a)\neq \phi(b)$. Consider $a\neq b$ steady states of $g$; then, by Theorem B.1., there exists $x$ such that $W(x)$ has a positive feedback loop, $C$, with vertices in the set $\{i:a_i\neq b_i\}$. 

We claim that $C$ is a strong positive feedback loop of $W$. By contradiction, suppose there is $k\in\{1,\ldots,n\}$ and $i,j\in C$ such that $k\rightarrow i$ and $k\multimapdot j$ are edges in $W(x)$ but not in $C$. Then, $W(x)$ has edges of the form $(l_1,i,\pm)$ and $(l_2,j,\pm)$ where $l_1,l_2\neq k$.
On the other hand, since $(k,i,+),(k,j,-)\in C\subseteq W(x)\subseteq W$, we have that $f_i=x_k\wedge \ldots$ and $f_j=\neg x_k \wedge \ldots$. We have two cases $x_k=0$ or $x_k=1$. In the case $x_k=0$ we obtain that $f_i=0$ for all values of $x_1,\ldots,x_{k-1},x_{k+1},\ldots,x_n$. In particular, $W(x)$ cannot have an edge of the form $(l,i,\pm)$ with $l\neq k$; this is a contradiction. In the case $x_k=1$ we obtain that $f_j=0$ for all values of $x_1,\ldots,x_{k-1},x_{k+1},\ldots,x_n$. In particular, $W(x)$ cannot have an edge of the form $(l,j,\pm)$ with $l\neq k$; this is a contradiction as well. 
Therefore,  $C$ is strong.

Since $C$ is a strong positive feedback loop in $W$, $C$ must intersect $J$. Since $C$ has all its vertices in the set $\{i:a_i\neq b_i\}$, $J$ intersects the set $\{i:a_i\neq b_i\}$. Therefore $\phi(a)=a_J\neq b_J= \phi(b)$. It follows that the restriction of $\phi$ to the set of steady states is an injective function. Therefore, $|\{x:f(x)=x\}|\leq | \{0,1\}^{|J|} |=2^{|J|}$.
\end{proof}

It is important to mention that Theorem \ref{thm:strong} was also proven in \cite{Richard20093281} using different techniques.\\

We now prove Theorem \ref{thm:badv}. 
\begin{proof}
Let $f:\{0,1\}^n\rightarrow \{0,1\}^n$ be an AND-NOT network with wiring diagram $W$. Let $C$ be a positive feedback loop that is strong and inconsistent. Then, there is a vertex $k_C$ such that there is a positive path of the form $k_C\rightarrow i_1\rightarrow \cdots \rightarrow i_r \rightarrow t_C$ from $k_C$ to $t_C\in C$ and a negative path of the form $k_C\rightarrow j_1\rightarrow \cdots \rightarrow j_r \multimapdot u_C$, from $k_C$ to $u_C\in C$ such that $k_C\rightarrow t_C$, $k_C\multimapdot u_C$ are not edges in $C$ and $|I_{j_1}|=\ldots=|I_{j_r}|=1$. Let $G$ be the graph obtained by adding to $W$ all edges of the form $k_C\rightarrow t_C$ and $k_C\multimapdot u_C$ where $C$ does not intersect $J$. Denote by $g:\{0,1\}^n\rightarrow \{0,1\}^n$ the AND-NOT network associated to $Z$. We claim that the steady states of $f$ and $g$ are the same. We prove this by induction on the number of extra edges. 

Suppose that $W$ and $Z$ only differ in the edge $k\rightarrow t$, then, by definition we must also have a path $k\rightarrow i_1\rightarrow \ldots \rightarrow i_r \rightarrow t$. Suppose that $g(x)=x$, we need to show that $f_j(x)=x_j$ for all $j$.
Since $W$ and $Z$ only differ in the edge $k\rightarrow t$ we have $f_j=g_j$ for $j\neq t$, $g_t=f_t\wedge x_k$ and $f_t=x_{i_r}\wedge \ldots$. Then, $f_j(x)=g_j(x)=x_j$ for $j\neq t$. It remains to show that $f_t(x)=x_t$. Consider first the case $x_t=0$, then, $g_t(x)=0$ and $x_i=0$ for some $i\in I_t^+$. If $i\neq k$, we have that the edge $i\rightarrow t$ is in $W$ and $f_t=x_{i_r}\wedge x_i\wedge \ldots$; then, $f_t(x)=x_{i_r}\wedge 0 \wedge\ldots=0=x_t$. If $i=k$, then $x_k=0$ which implies that $x_{i_1}=0$ (because of the edge $k\rightarrow i_1$); similarly, we obtain that $x_{i_r}=0$. Then, $f_t(x)=0\wedge \ldots =0=x_t$. That is, $f_t(x)=x_t$. Now consider the case $x_t=1$. Since $1=x_t=g_t(x)=f_t(x)\wedge x_k$, we have $f_t(x)=1=x_t$. A similar argument shows that if $f(x)=x$, then $g(x)=x$. The proof for when $W$ and $Z$ only differ in the edge $k\multimapdot t$ is analogous. By induction we obtain that $f$ and the AND-NOT network obtained by a completion of $W$ have the same steady states.

Now, we claim that $J$ intersects all strong positive feedback loops of $Z$. Let $C'$ be a strong positive feedback loop of $Z$. Then we have two cases: $C'$ is in $W$ or it is not. Consider the case $C'\subseteq W$. Then, $C'$ is a strong positive feedback loop in $W$. If $C'$ is consistent in $W$, then it intersects $J$. If $C'$ is inconsistent (and strong) in $W$, then it also intersects $J$. Now consider the case $C'\nsubseteq W$. Then, at least one edge of $C'$ is of the form $k_C\rightarrow t_C$ or $k_C\multimapdot u_C$ for some $C$ strong and inconsistent that does not intersect $J$. Then, $k_C\in J$ and $J$ intersects $C$. In any case we obtain that $J$ intersects all strong positive feedback loops of $Z$.

Then, the number of steady states of $g$, and hence $f$, is at most $2^{|J|}$.
\end{proof}

\section*{Appendix C}

We first analyze the original BN using previous results. In \cite{mendozamethod}, the authors showed that the positive feedback loops of the BN $f:\{0,1\}^{23}\rightarrow f:\{0,1\}^{23}$ are:

$\{4,5,15,18,12,13,21,11,14\}$

$\{4,5,15,18,12,13,21,9,20\}$

$\{4,5,15,18,12,13,21,1,6,7,19\}$

$\{4,5,15,18,12,13,21,1,20\}$

$\{4,5,15,18,12,13,21,1,22\}$

$\{4,5,15,18,17,13,21,11,14\}$

$\{4,5,15,18,17,13,21,9,20\}$

$\{4,5,15,18,17,13,21,1,6,7,19\}$

$\{4,5,15,18,17,13,21,1,20\}$

$\{4,5,15,18,17,13,21,1,22\}$

$\{4,5,15,18,22\}$

$\{4,5,15,18,22,17,13,21,11,14\}$

$\{4,5,15,18,22,17,13,21,9,20\}$

$\{4,5,15,18,22,17,13,21,1,6,7,19\}$

$\{4,5,15,18,22,17,13,21,1,20\}$

$\{4,5,15,18,22,1,12,13,21,11,14\}$

$\{4,5,15,18,22,1,12,13,21,9,20\}$

$\{4,5,15,18,22,1,6,7,19\}$

$\{4,5,15,18,22,1,20\}$

$\{12,13,21,1\}$

$\{13,21,1, 22,17\}$

$\{22\}$

$\{22,1\}$

$\{1\}$

We will use the following two theorems (proven in \cite{Aracena_FB_BRN,Richard20093281}, respectively) that give upper bounds on the number of steady states.

\begin{theorem}\label{thm:pfv2}
Let $W$ be the wiring diagram of a BN network and suppose $J$ is a set of vertices that intersects all positive  feedback loops in $W$. Then, the number of steady states is at most $2^{|J|}$.
\end{theorem}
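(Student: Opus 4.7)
The plan is to follow the same strategy used in the proof of Theorem \ref{thm:strong}: reduce the bound to showing injectivity of a projection on the set of steady states. Given a Boolean network $f:\{0,1\}^n\rightarrow\{0,1\}^n$ with wiring diagram $W$ and a set $J$ intersecting all positive feedback loops of $W$, I would define $\phi:\{0,1\}^n\rightarrow\{0,1\}^{|J|}$ by $\phi(x)=x_J$. If this map is injective on the set of steady states, the number of steady states is at most $|\{0,1\}^{|J|}|=2^{|J|}$, which is the desired bound.

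To prove injectivity, suppose for contradiction that $a\neq b$ are both steady states of $f$. Applying Theorem B.1 (which is valid for any Boolean network, not just AND-NOT networks), there exists $x\in\{0,1\}^n$ such that the local interaction graph $W(x)$ contains a positive feedback loop $C$ whose vertices are all contained in $D=\{i : a_i\neq b_i\}$. Under the standard convention for wiring diagrams of general Boolean networks, where $(j,i,+)$ (respectively $(j,i,-)$) is an edge of $W$ whenever some flip of the $j$-th coordinate produces a positive (respectively negative) change in $f_i$, one has $W(x)\subseteq W$ for every $x$. Hence $C$ is also a positive feedback loop of $W$. By the hypothesis on $J$, the loop $C$ intersects $J$; since $C\subseteq D$, this forces $a$ and $b$ to disagree on at least one coordinate in $J$, so $\phi(a)\neq\phi(b)$.

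This argument is strictly easier than the proof of Theorem \ref{thm:strong}, because there is no need for the delicate case analysis distinguishing strong from non-strong feedback loops: every positive feedback loop of $W$ already intersects $J$ by hypothesis, so one never has to rule out ``external'' edges $k\rightarrow i$, $k\multimapdot j$ that could violate strongness. The only subtle point, and the one I expect to be the main obstacle to spell out carefully, is the definition of $W$ for a general Boolean network; the convention must be broad enough that $W(x)\subseteq W$ for all $x$, which with the standard ``essential variable plus sign of the difference'' convention is automatic (and is where non-monotone dependencies may contribute both a positive and a negative edge to $W$). Once that containment is in hand, the entire proof is a direct application of Theorem B.1, with no further combinatorial work needed.
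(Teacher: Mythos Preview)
Your argument is correct, but note that the paper does not actually supply its own proof of this statement: Theorem~\ref{thm:pfv2} is quoted in Appendix~C as a known result, with the proof attributed to \cite{Aracena_FB_BRN}. So there is no ``paper's proof'' to compare against in the strict sense.

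That said, your route via Theorem~B.1 is valid and is precisely the simplification of the Appendix~B proof of Theorem~\ref{thm:strong} that one would expect: drop the case analysis establishing strongness of $C$, since the hypothesis here already guarantees $J$ meets \emph{every} positive feedback loop of $W$. The only point requiring care, as you correctly flag, is that the paper defines the wiring diagram formally only for AND-NOT networks; for a general Boolean network one must adopt the convention $W=\bigcup_{x} W(x)$ (equivalently, the ``essential variable plus local sign'' convention you describe), under which $W(x)\subseteq W$ is tautological. The paper itself hints at this after defining $W(x)$ (``in fact, this is true for more general networks''), so your reading is consistent with the authors' intent. Aracena's original proof in \cite{Aracena_FB_BRN} proceeds somewhat differently, but the Theorem~B.1-based derivation you give is both shorter and entirely in keeping with the paper's framework.
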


\begin{theorem}\label{thm:fpfv}
Let $W$ be the wiring diagram of a BN network and suppose $J$ is a set of vertices that intersects all functional positive  feedback loops in $W$. Then, the number of steady states is at most $2^{|J|}$.
\end{theorem}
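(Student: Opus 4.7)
The plan is to imitate the proof of Theorem \ref{thm:strong} but with a streamlined middle step, since we no longer have to promote a local positive feedback loop to a \emph{strong} one in the global wiring diagram. Throughout, I will use the natural definition (as in \cite{Richard20093281}) that a positive feedback loop $C$ of $W$ is \emph{functional} precisely when $C \subseteq W(x)$ for some state $x \in \{0,1\}^n$.

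First, I would define the projection $\phi : \{0,1\}^n \to \{0,1\}^{|J|}$ by $\phi(x) = x_J$ and aim to show that its restriction to the set of steady states of $f$ is injective. If so, the bound $|\{x : f(x) = x\}| \leq 2^{|J|}$ is immediate since $\phi$ lands in a set of size $2^{|J|}$.

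Next, suppose $a$ and $b$ are distinct steady states of $f$. By Theorem B.1.\ applied to $f$, there exists a state $x$ such that $W(x)$ contains a positive feedback loop $C$ whose vertices all lie in $\{i : a_i \neq b_i\}$. The key observation, and the reason the argument is cleaner than for Theorem \ref{thm:strong}, is that the inclusion $C \subseteq W(x)$ is exactly the condition certifying $C$ as a functional positive feedback loop of $W$; no additional verification (e.g.\ that $C$ be strong in $W$) is needed. Since $J$ intersects every functional positive feedback loop, $J \cap C \neq \emptyset$, and because $C \subseteq \{i : a_i \neq b_i\}$, this forces $\phi(a) = a_J \neq b_J = \phi(b)$. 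Hence $\phi$ is injective on steady states and the bound follows.

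The main obstacle, such as it is, is making sure the notion of ``functional positive feedback loop'' in \cite{Richard20093281} really coincides with ``contained in $W(x)$ for some $x$'': if the source uses an equivalent but superficially different formulation (for instance, demanding that the signs along the loop be simultaneously realized by some state), one would need a short bridging lemma showing the two notions agree. Assuming this correspondence, which is standard in the discrete dynamics literature, the proof requires essentially nothing beyond Theorem B.1.\ and the definition of functionality, which explains why this sharper bound sits comfortably in the same framework as Theorem \ref{thm:strong}.
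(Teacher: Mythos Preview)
The paper does not actually prove this theorem: it is stated in Appendix~C as a known result, attributed to \cite{Richard20093281}, and invoked without proof. So there is no in-paper argument to compare against.

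That said, your proof is correct and is precisely the standard argument. It is also the natural simplification of the paper's proof of Theorem~\ref{thm:strong}: in that proof the middle paragraph is devoted to showing that the local positive feedback loop $C\subseteq W(x)$ produced by Theorem~B.1 is \emph{strong} in $W$, which requires the AND-NOT structure; here you correctly observe that $C\subseteq W(x)$ is already the definition of functionality, so that paragraph collapses to one line and no hypothesis on the logic of $f$ is needed. Your caveat about matching definitions is well placed but harmless: in \cite{Richard20093281} and \cite{Remy2008335} a circuit is called functional (or local) exactly when it lies in some $W(x)$, so no bridging lemma is required.
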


It is easy to see that all positive feedback loops intersect the set $\{1,4,22\}$. Therefore, Theorem \ref{thm:pfv2} gives the upper bound $2^3=8$. Also, it is possible to show that the functional positive feedback loops are $\{4,5,15,18,12,13,21,11,14\}$, $\{22\}$, $\{22,1\}$ and $\{1\}$ (e.g. using the GINsim software \cite{ginsim} ). Therefore, Theorem \ref{thm:fpfv} gives the upper bound 8 as well.

We now analyze the AND-NOT network using our results. The positive feedback loops of the AND-NOT network in Figure \ref{fig:ThAN} are the following (new nodes are in bold).

$\{\textbf{24},4,5,15,18,12,13,21,11,14\}$

$\{\textbf{24},4,5,15,18,12,13,21,9,20\}$

$\{\textbf{24},4,5,15,18,12,13,21,\textbf{26},1,6,7,19\}$

$\{\textbf{24},4,5,15,18,12,13,21,\textbf{26},1,20\}$

$\{\textbf{24},4,5,15,18,12,13,21,\textbf{26},1,22\}$

$\{\textbf{24},4,5,15,18,17,13,21,11,14\}$

$\{\textbf{24},4,5,15,18,17,13,21,9,20\}$

$\{\textbf{24},4,5,15,18,17,13,21,\textbf{26},1,6,7,19\}$

$\{\textbf{24},4,5,15,18,17,13,21,\textbf{26},1,20\}$

$\{\textbf{24},4,5,15,18,17,13,21,\textbf{26},1,22\}$

$\{\textbf{24},4,5,15,18,\textbf{25},22\}$

$\{\textbf{24},4,5,15,18,\textbf{25},22,17,13,21,11,14\}$

$\{\textbf{24},4,5,15,18,\textbf{25},22,17,13,21,9,20\}$

$\{\textbf{24},4,5,15,18,\textbf{25},22,17,13,21,\textbf{26},1,6,7,19\}$

$\{\textbf{24},4,5,15,18,\textbf{25},22,17,13,21,\textbf{26},1,20\}$

$\{\textbf{24},4,5,15,18,\textbf{25},22,1,12,13,21,11,14\}$

$\{\textbf{24},4,5,15,18,\textbf{25},22,1,12,13,21,9,20\}$

$\{\textbf{24},4,5,15,18,\textbf{25},22,1,6,7,19\}$

$\{\textbf{24},4,5,15,18,\textbf{25},22,1,20\}$

$\{12,13,21,1\}$

$\{13,21,\textbf{26},1, 22,17\}$

$\{22,\textbf{25}\}$

$\{22,1\}$

$\{1,\textbf{26}\}$

Those feedback loops that contain 4 and 13 are inconsistent because of the paths $1\rightarrow 12\rightarrow 13$, $1\rightarrow 6\rightarrow 7\rightarrow 19\multimapdot 4$; they are also strong. All other positive feedback loops are consistent and intersect $\{1,22\}$. That is, $\{1,22\}$ intersects all consistent positive feedback loops, and for each positive feedback loop $C$ that is inconsistent and strong, $J$ contains $k_C=1$. Hence, $\{1,22\}$ dominates the wiring diagram of $g$. Therefore, Theorem \ref{thm:BNbound} gives the better upper bound  $2^2=4$ on the number of steady states of $f$.

\section*{Acknowledgement}
The research was funded by NSF grants CMMI-0908201 and
DMS-1062878.

\end{document}